
\makeatletter\def\set@pdftextpagesize{\set@pdftexpagesize}\makeatother

\documentclass[
 paper=A4,pagesize=automedia,fontsize=12pt,
 BCOR=6mm,DIV=18,
 twoside,headinclude,footinclude=false,
 UKenglish,fleqn,             
 openany,
 bibliography=totoc,          
 listof=totoc,                
 listof=flat,                 
 numbers=noenddot,          
 cleardoublepage=empty      
]{scrbook}
\setlength\parindent{0em}

\usepackage[utf8]{inputenc}
\usepackage[UKenglish]{babel}
\usepackage{cmap}
\usepackage[T1]{fontenc}
\usepackage{scrpage2} \pagestyle{scrheadings}
                      \clearscrheadfoot
                      \ihead{\headmark}\ohead{\pagemark}
                      \automark[section]{chapter}
                      \setheadsepline{0.5pt}
\setlength{\headheight}{1.5\baselineskip}
\usepackage{setspace} \onehalfspacing
\deffootnote{1em}{1em}{\textsuperscript{\thefootnotemark }}
\usepackage{graphicx,xcolor}
\usepackage{tabularx}
\usepackage{amsmath,amsfonts,amssymb}
\usepackage{flafter,afterpage}
\usepackage[section]{placeins}
\usepackage[margin=8mm,font=small,labelfont=bf,format=plain]{caption}
\usepackage[margin=8mm,font=small,labelfont=bf,format=plain]{subcaption}

\usepackage{amssymb}
\usepackage{amsmath}
\usepackage{amsthm}
\usepackage{mathtools}
\usepackage{calrsfs}
\usepackage[justification=centering]{caption}
\usepackage{appendix}
\usepackage{multirow}

\newtheorem{theorem}{Theorem}
\newtheorem{lemma}{Lemma}

\makeatletter
\newcommand*{\textoverline}[1]{$\overline{\hbox{#1}}\m@th$}
\makeatother

\numberwithin{equation}{chapter}
\numberwithin{figure}{chapter}
\numberwithin{table}{chapter}

\makeatletter
\g@addto@macro\normalsize{%
  \setlength\abovedisplayskip{6pt}
  \setlength\belowdisplayskip{10pt}
  \setlength\abovedisplayshortskip{6pt}
  \setlength\belowdisplayshortskip{10pt}
}
\makeatother

\begin{document}

\frontmatter 

\begin{titlepage}
 \begin{tabularx}{\linewidth}{X}
  \vspace{4.5em}

  \begin{singlespace}\begin{center}\bfseries\Huge
  
  The Structure of Free Gauge Fields over Minkowski Space\\
  \vspace{1.5em}
  
  \Large Dedicated to Lars Gårding in Memoriam
  \vspace{3.5em}
  
  Christian Högfors Ziebell\\
  
  \large
  CB Centre for Biomechanics AG\\
  Hinter der Saline 9, D-21339 Lüneburg, Germany\\
  Chr.Ziebell@yahoo.de\\
  
  \end{center}\end{singlespace}
  
  \vspace{3.5em}
  
  \begin{center}\large\textbf{Abstract}\end{center}
It is shown that the physical states of a source free gauge field form pre-Hilbert spaces already on the classical level. These spaces may be closed in such a way that the determining characteristics remain. One type of free fields has a continuous mass spectrum with no mass gap. One type is massless. Dark matter may consist of (otherwise) free gauge fields.\\
Second quantisation of free gauge fields yield Gårding-Wightman fields identifying the massless and the massive fields as different manifestations of the same field.

 \end{tabularx}
\end{titlepage}

\cleardoublepage
\tableofcontents

\mainmatter 

\cleardoublepage
\chapter{Introduction}
The Gårding-Wightman framework for quantum field\footnote{A field in this context is a smooth geometrical object.} theory is a mathematically well-developed theory, which in a set of axioms tries to capture the general properties of quantum fields on physical space-time \cite{source:streater1964pct}, \cite{source:jost1965general}.\\
These axioms may be taken as

\begin{enumerate}
\item The states of a system form a complex separable Hilbert space $H$. The elements of $H$ and their inner products are to be interpreted according to relativistic quantum mechanics \cite{source:von1932mathematische} and reflect Wigner’s \cite{source:wig1959group} symmetry conditions.
\item There exist a finite number of $C^{\infty}$ test function spaces $F_k$ whose elements are geometrical objects on $\mathbb{R}^{1,3}$ and linear mappings $A_k : F_k \rightarrow L(H)$ where $A_k(f)$, $\forall \, f \in F_k$ have a common dense domain $D$.\\
The mappings $\left< \Phi, A_k(\cdot) \Psi \right> : F_k \rightarrow \mathbb{C}$ belong to $F_k^{'}$, $\forall \, \Phi \in H$ and $\forall \, \Psi \in D$.
\item For each $A_k$ there exists a (strongly continuous) representation $\pi$ of the semi-direct product of translations $T$ of $\mathbb{R}^{1,3}$ and the component of the Lorentz group connected to the identity $L_k^{\uparrow}$ or its covering group $SL(2, \mathbb{C})$. (The different $A_k$ may belong to different spin representations.) We let $G$ denote the actual group and $g \in G$ its corresponding elements.
\item There exists a unique vector $\Omega \in D$ that is invariant under $G$ (i.e. $\forall \, g \in G : \pi(g) \Omega = \Omega$). The Stone measure $E$ associated with the (abelian) subgroup $T \subset G$ has support in the closure of the forward light cone of $\mathbb{R}^{1, 3}$.
\item If $f_1 \in F_i$ and $f_2 \in F_j$ have space-like separated supports, then $A_i(f_1)$ and $A_j(f_2)$ commute (or anti-commute) on $D$.
\item The image of the ring of polynomials in $A_k(f)$, $f \in F_k$, acting on $\Omega$ is dense in $H$.
\end{enumerate}

For a motivation for this set of axioms we refer to \cite{source:von1932mathematische} and \cite{source:wig1959group}. However reasonable these assumptions may seem, there are some pitfalls.
All known massive quantum field models on Minkowski space $\mathbb{R}^{1,3}$ are models of (generalised) free\footnote{The term 'free' is actually not well-defined quantum mechanically. Mostly it refers to classically source free differential equations.} fields.
\emph{Massive} means (c.f. axiom 4) $\text{supp}(E(p)) = \{p=0\} \cup \{p \cdot p \ge m^2\}$ for $p \in \mathbb{R}^{1,3}$ and some real non-zero $m$. Free is a term indicating that the modelled field is obtained through the procedure of canonical quantisation of a classical field without interaction \cite{source:dirac1932the}.
A slight modification of the properties of known quantum fields leads to the concept of \emph{generalised free fields} \cite{source:fock1932konfigurationsraum} also belonging to the set of models obeying the axioms.
Generalised free fields may be obtained by canonical quantisation of classical fields, whose Lagrangian densities are quadratic. (Note that canonical quantisation is not well-defined except in these two cases.)
The (separable) Hilbert space is then a Fock space. (As a matter of fact reference \cite{source:fock1932konfigurationsraum} has had a strong influence on the formulations of the axioms.)
There was hitherto, however, no known example of a quantum field theory with interaction fulfilling the axioms in $\mathbb{R}^{1,3}$. Another difficulty is presented by Haag’s theorem \cite{source:haag1955quantum}, raising several questions about the fourth axiom.\\
The Gårding-Wightman requirements on a quantum field theory do not touch upon the question of “quantisation” of a classical theory. Several schemes are available, none is quite satisfactory.\\

There is also an alternative to the functional analytic approach of Gårding and Wightman, which is an algebraic framework proposed by Haag and Kastler \cite{source:haag1964algebraic}. Although the approaches are quite different, some very similar results are derivable in both frameworks.\\
The scope of this paper is rather modest in view of the cited references. The question was raised, as one of the Millennium Problems from the Clay Mathematics Institute, whether a free quantum Yang-Mills theory \cite{source:milleniumYMmassgap} exists and if such a theory may exhibit a mass gap. The somewhat imprecise problem formulation was:\\
“Prove that for any compact simple gauge group $G$, a non-trivial quantum Yang–Mills theory exists on $\mathbb{R}^4$ and has a mass gap $\Delta > 0$. Existence includes establishing axiomatic properties at least as strong as those cited in \cite{source:streater1964pct} and \cite{source:jost1965general}“.\\
This paper affirms only parts of the proposition. There exists, for any smooth group $G$, a non-trivial quantum field theory on an arbitrary principal bundle with connection over Minkowski space and structure group $G$ satisfying the Gårding-Wightman requirements and exhibiting massive as well as massless states.
The field considered is the curvature form of the connection. When the curvature form is source free, there is no mass gap.\\
In this paper the structure of a gauge theory, classical or quantum, is elucidated. The term \emph{gauge group} stems from the Lagrangian formulation of field theory, classical or not.
If a Lagrangian is invariant under the action of a compact Lie group on the (components of the) field, we may “localise” this invariance.
This means that instead of splitting the base space into manifolds where the group elements may be taken as constants, we consider the fields to be cross-sections of a fibre bundle associated to the principal bundle over the base, having this Lie group as structure group.

This is classically just a change of description (coordinate transformation on the total space).\\
We note that, although the axiomatic framework for quantum field theory is not dependent on a Lagrangian approach, nor on variational methods, we base much of the terminology on such approaches.
Thus a gauge group is not always arbitrary, e.g. when the above-mentioned fibre bundle must allow for integration of non-gauge-invariant objects.
Similarly the quantum Hilbert space is most often an $L^2$ type space, also requiring integration. When all physical objects are taken to be gauge invariant, any smooth group is possible.\\
Whenever gauge symmetry is broken, possible gauge groups are likely to be restricted to the special unitary groups (and their subgroups).\\
Today a \emph{Yang-Mills theory} \cite{source:yang1954conservation} denotes a gauge theory with the bundle connection considered as a field. Such a field has a quadratic Lagrangian density.\\
In physics a field theory is termed \emph{trivial} either if it is equivalent to a (generalised) free theory, or if its associated $S$-matrix \cite{source:heisenberg1943beobachtbaren} is trivial.
To define an $S$-matrix the theory must allow for \emph{asymptotic states}. These may be defined in a model satisfying the Gårding-Wightman axioms but require some additional assumptions (Haag-Ruelle theory) \cite{source:jost1965general}. The added assumptions do however not suffice to exhibit triviality or non-triviality of the $S$-matrix.\\
The main result of this paper is:
\begin{itemize}
\item If the curvature field $\Omega$ of any fibre bundle associated with a principal bundle over $\mathbb{R}^{1,3}$  with smooth structure group $G$ satisfies $D \star \Omega = 0$ , then $\Omega$ is a finite sum of lifted null field solutions of Maxwell’s equations. 
\item The one particle states in a (second) quantised description are of two kinds, one of which carries mass zero and the other with a continuous mass spectrum. The two sectors are not orthogonal.
\item The free fields associated with gauge fields satisfy the Gårding-Wightman axioms.
\end{itemize}

In chapter \ref{chapter:electromagnetism} the structure of the free electromagnetic field is investigated. It is shown that the smooth states of a free field constitute two pre-Hilbert spaces, which may each be closed retaining its desired properties.
In chapter \ref{chapter:yangmills} analogous methods show that a free Yang-Mills field has a similar structure, exhibiting the same two-field structure as described by Maxwell’s equations, one massless and one massive.
In the appendix some condensed background material and notation is presented.
The paper is concentrated on the particular aspects concerning smooth principal bundles over Minkowski space.
Most detail and general theorems involved can be found in standard textbooks for mathematical physicists \cite[Choquet-Bruhat et al.]{source:choquet1977analysis}, \cite[Dubrovin et al.]{source:dubrovin1984modern}, \cite[Abraham et al.]{source:abraham1988manifolds}, \cite[Taylor]{source:taylor1986noncommutative}.

\chapter{A Digression on the Electromagnetic Field}
\label{chapter:electromagnetism}

\section{Some Mathematical Consideration}
\label{section:mathconsideration}
We adhere to the physics convention in that we call a smooth section of a bundle a field.
The speed of light is set to 1. In physics a classical electromagnetic field is modelled as a harmonic $2$-form $F$, on Minkowski space $M = \mathbb{R}^{1,3}$, i.e. the abelian topological group $\mathbb{R}^4$ with its translation invariant topology, together with a metric tensor $\eta$ of signature $(1, -1, -1, -1)$.\\

In coordinates the strict (covariant) components of $F$ are $F = \sum \limits_{\mu < \nu} F_{\mu \nu} dx^{\mu} \wedge dx^{\nu}$ and we identify the components of the matrix $F_{\bullet \bullet}$ with components of the electric and magnetic fields (in suitable units) on $\mathbb{R}^3$ as follows:

\begin{equation*}
F_{\bullet \bullet} = 
\begin{pmatrix}
0 & E_1 & E_2 & E_3 \\
-E_1 & 0 & H_3 & -H_2 \\
-E_2 & -H_3 & 0 & H_1 \\
-E_3 & H_2 & -H_1 & 0 \\
\end{pmatrix}
\end{equation*}

Maxwell’s (microscopic) equations for the electromagnetic field read:
\begin{align}
\frac{\partial H_1}{\partial x^1} + \frac{\partial H_2}{\partial x^2} + \frac{\partial H_3}{\partial x^3} &= 0 
\label{eq:maxwellDivH} \\
\frac{\partial E_1}{\partial x^1} + \frac{\partial E_2}{\partial x^2} + \frac{\partial E_3}{\partial x^3} &= 4 \pi \rho 
\label{eq:maxwellDivE}
\end{align}

\begin{align}
\frac{\partial (H_1, H_2, H_3)}{\partial x_0} &=
  \left( \frac{\partial E_3}{\partial x^2} - \frac{\partial E_2}{\partial x^3}, 
   \frac{\partial E_1}{\partial x^3} - \frac{\partial E_3}{\partial x^1},
   \frac{\partial E_2}{\partial x^1} - \frac{\partial E_1}{\partial x^2} \right) 
\label{eq:maxwelldHdt} \\
\frac{\partial (E_1, E_2, E_3)}{\partial x_0} &=
  - \left( \frac{\partial H_3}{\partial x^2} - \frac{\partial H_2}{\partial x^3}, 
   \frac{\partial H_1}{\partial x^3} - \frac{\partial H_3}{\partial x^1},
   \frac{\partial H_2}{\partial x^1} - \frac{\partial H_1}{\partial x^2} \right) + 4 \pi \rho \cdot (v_1, v_2, v_3) 
\label{eq:maxwelldEdt}
\end{align}

Or equvalently with equations \ref{eq:maxwelldHdt}, \ref{eq:maxwellDivH} $\Leftrightarrow$ \ref{eq:maxwelldF} and equations \ref{eq:maxwelldEdt}, \ref{eq:maxwellDivE} $\Leftrightarrow$ \ref{eq:maxwellSdSF}:
\begin{align}
\text{d} F &= 0 \label{eq:maxwelldF} \\
\star^{-1} \text{d} \star F &= 4 \pi j \label{eq:maxwellSdSF}
\end{align}

In these equations $j = \rho \cdot (1, v_1, v_2, v_3)$ where $\rho$ is the electric charge density and $v = (v_1,v_2,v_3)$ is the 3-velocity of charge. If $\rho(x) = 0, \forall \, x \in \mathbb{R}^{1,3}$ we deal with a source free field, a \emph{free field} for short.
In terms of the physical fields on $\mathbb{R}^3$, the electric field $E = -(E_1, E_2, E_3) = (F^{01}, F^{02}, F^{03})$ and the pseudo-vector $H = (H_1, H_2, H_3) = (F^{23}, -F^{13}, F^{12})$ representing the magnetic field, we may write Maxwell’s equations in their conventional vector analysis form:

\begin{align*}
\text{div} \; E &= 4 \pi \rho & \frac{\partial H}{\partial x^0} &= \text{curl} \; E \\
\text{div} \; H &= 0 & \frac{\partial E}{\partial x^0} &= - \text{curl} \; H + 4\pi \rho v \\
\end{align*}

These equations are invariant under the component of the Lorentz group connected with the identity.
The pointwise eigenvalues of $F$ relative to the Lorenz metric are the roots of

\begin{equation*}
\lambda^2 = \frac{E^2 - H^2}{2} \pm \sqrt{\frac{(E^2 - H^2)^2}{4} + (E \cdot H)^2}
\end{equation*}

Wells \cite{source:wells1973differential} called solutions of Maxwell’s equations whose pointwise eigenvalues are zero everywhere \emph{null solutions}.\\
The Hodge star operation on the $2$-form $F$ is an involution: $\star \star F = -F$. This is often used to define a complex structure on the space of $2$-forms at a point on $M$ by $\forall z \in \mathbb{C} : z F = \text{Re} \, z \cdot F + \text{Im} \, z \cdot \star F$, i.e. the space of $2$-forms at a point of $M$ is isomorphic to $\mathbb{C}^3$.\\
We may thus employ the monomorphism $\phi : \Lambda^2(TM \rvert_{t = t_0}) \rightarrow T \mathbb{C}^3$ such that
$F_\phi := \phi(F) = E + i H$. (The vector $\frac{1}{\sqrt{2}}(E + i H)$ in $\mathbb{C}^3$ is sometimes called the Riemann-Silberstein vector). This vector is the eigenvector of the Hodge star operator in the $\phi$ representation.\\
In the $\phi$ representation null solutions are elements of the set in $\mathbb{C}^3$ given by $F_\phi \cdot F_\phi = 0$.\\
When the components of $E$ and $H$ belong to $D(\mathbb{R}^3)$ for every $t$, then the "free energy” integral 

\begin{equation*}
\int \limits_{\mathbb{R}^3} (E(t, x) - i H(t, x)) \cdot (E(t, x) + i H(t, x)) \text{d}^3 x =
  \int \limits_{\mathbb{R}^3} \langle F_\phi, F_\phi \rangle \text{d}^3 x
\end{equation*}

 is defined and is independent of $t$ (since $\partial_t \langle F_\phi, F_\phi \rangle =  \langle \text{curl} \; F_\phi, F_\phi \rangle + \langle F_\phi, \text{curl} \; F_\phi \rangle$).
 Here $\langle \cdot, \cdot \rangle$ denotes the standard Hermitean scalar product on $\mathbb{C}^3$ and $\int \limits_{\mathbb{R}^3} \langle \cdot, \cdot \rangle \text{d}^3 x$ is the corresponding $L^2$ norm, where the adjoint of the curl operator is minus itself.\\
 
Maxwell’s equations without sources constitute a constant coefficient homogenous hyperbolic system.
A solution is thus uniquely determined by initial conditions and the regular set of smooth solutions, $S_R = \{F_\phi, G_\phi, \dots \}$ form a complex linear space with the positive sesquilinear form, scalar product, given by $\int \limits_{\mathbb{R}^3} \langle F_\phi, G_\phi \rangle \text{d}^3x$.
This space may be closed to a separable Hilbert space, $H_F$, in the associated $L^2$ type norm. (I will below exhibit a null base for this Hilbert space.) We will accordingly at first analyse possible initial conditions.
As we are doing physics we will concentrate on initial conditions having bounded energy, i.e. belonging to $L^2$. The cases of compact regular and singular support will be treated separately.\\

We define the support of the initial condition as the support of the field $F(0, x)$.
First we will assume that the field has regular compact support $S$ and that $E(0, x)$ and $H(0, x)$ have components belonging to $D(\mathbb{R}^3)$.
The source free Maxwell equations state that $\text{div} \; E = \text{div} \; H = 0$, so we are dealing with divergence free $SO(3)$ vectors with compact support.
Being vector fields on domains with compact closure, both $E$ and $H$ are complete.
Since $\text{div} \; E = \text{div} \; H = 0$ in $\text{Int}(\text{supp}(E))$ and $\text{Int}(\text{supp}(H))$ respectively (mutatis mutandi), these vector fields have no zeroes on their respective domains.
That the components belong to $D(\mathbb{R}^3)$ implies that they are bounded.
$E$ and $H$ are thus non-singular on their domains, implying the existance of smooth functions $e$ and $h$ such that $E(e) = H(h) = 0$.
The linear space of divergence free vector fields whose components belong to $D(\mathbb{R}^3)$ will be denoted $D^\nu(S)$ or $D^\nu$ when its support is understood.\\
The following theorem is central to the analysis and physical understanding of the free smooth electromagnetic field and other gauge fields.\\

\begin{theorem}
\label{theorem:sumofsourcefreenullsolutions}
Any source free solution to Maxwell’s equations with initial vectors ($E$ and $H$) belonging to $D^\nu$ may be uniquely written as a sum of source free null solutions.
\end{theorem}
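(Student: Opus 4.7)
My plan is to work throughout in the $\mathbb{C}^3$-representation $F_\phi = E + iH$ of section~\ref{section:mathconsideration}, in which the source-free system collapses to the single first-order evolution $\partial_t F_\phi = i\,\mathrm{curl}\,F_\phi$ together with the constraint $\mathrm{div}\,F_\phi = 0$. Because $\partial_t(\mathrm{div}\,F_\phi) = i\,\mathrm{div}\,\mathrm{curl}\,F_\phi = 0$, the constraint is propagated by the evolution and it suffices to solve the evolution on the transverse part of the field. Taking the spatial Fourier transform turns the PDE into the pointwise linear ODE
\begin{equation*}
\partial_t\widehat{F_\phi}(t,k) = -\,k\times\widehat{F_\phi}(t,k),\qquad k\cdot\widehat{F_\phi}(t,k) = 0,
\end{equation*}
which is an ODE on the two-dimensional transverse plane $k^\perp\subset\mathbb{C}^3$ at each $k\neq 0$.

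Next I would diagonalise $-k\times$ on $k^\perp$. Picking a positively oriented orthonormal frame $\{e_1(k),e_2(k),\hat k\}$, the eigenvectors are the circular polarisations $\epsilon_\pm(k) = e_1(k)\pm i\,e_2(k)$ with eigenvalues $\pm i|k|$, and a direct computation gives the null condition $\epsilon_\pm(k)\cdot\epsilon_\pm(k) = 0$ in the complex bilinear form on $\mathbb{C}^3$, together with transversality $k\cdot\epsilon_\pm(k) = 0$. Expanding $\widehat{F_\phi}(0,k) = a_+(k)\epsilon_+(k) + a_-(k)\epsilon_-(k)$, integrating the ODE and synthesising gives
\begin{equation*}
F_\phi(t,x) = \int\frac{d^3k}{(2\pi)^{3/2}}\sum_{s=\pm}a_s(k)\,\epsilon_s(k)\,e^{i(k\cdot x + s|k|t)}.
\end{equation*}
Every integrand is a plane wave satisfying $k^2=0$ with null transverse polarisation, and is therefore a smooth source-free null solution in the sense of Wells. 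The hypothesis that the Cauchy data lie in $D^\nu$ yields, via Paley-Wiener, coefficients $a_\pm$ of Schwartz type away from $k=0$, so the synthesis integrals and all their derivatives converge uniformly on compacta and define a smooth solution on $\mathbb{R}^{1,3}$. Uniqueness follows from injectivity of the Fourier transform and pointwise linear independence of $\epsilon_+(k)$ and $\epsilon_-(k)$ on the two-dimensional space $k^\perp$, which forces any alternative representation of the same kind to agree coefficient by coefficient.

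The main obstacle is interpretational. The pointwise null condition $F_\phi\cdot F_\phi = 0$ cuts out a quadric, not a linear subspace, so two null solutions add to a null solution only when their polarisations are mutually null; a generic smooth free field therefore cannot be written as a \emph{finite} sum of pointwise null solutions. The statement of the theorem must accordingly be read as the continuous superposition of elementary null plane waves constructed above, indexed by $(k,\pm)$ on the forward and backward light cones, and its substantive content is that this representation exists, is unique, and exhausts every $D^\nu$-Cauchy-data solution. A secondary, purely technical point is that no global smooth choice of frame $\{e_1(k),e_2(k)\}$ on $\mathbb{R}^3\setminus\{0\}$ exists by the hairy-ball theorem, so the construction has to be done on two overlapping charts on the momentum sphere and shown to be independent of the frame; equivalently one works with the smooth helicity projectors onto the $\pm i|k|$-eigenspaces, which makes the decomposition intrinsic and only shifts the phase of the coefficients $a_\pm$.
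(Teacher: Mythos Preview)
Your Fourier/helicity decomposition is mathematically sound and a standard tool, but it does not prove the theorem the paper actually asserts. By ``sum'' the paper means a \emph{finite} sum --- in fact at most four terms per connected component of the initial support --- and by ``null solution'' it means a field whose initial data again lies in $D^\nu$: smooth, compactly supported, divergence-free, with $E_j\perp H_j$ and $|E_j|=|H_j|$ pointwise. The discussion immediately following the theorem, of ``fundamental entities'' with rest frames, compact time-independent support, and massive behaviour under $L_+^\uparrow$, makes this reading inescapable; null plane waves have none of these properties and are treated later in the paper as a separate, singularly supported sector $H_P$.

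Your assertion that a generic free field cannot be written as a finite sum of pointwise null solutions is exactly what the paper's construction purports to refute. The argument is entirely position-space: Lemma~\ref{lemma:sumoftwonullsolutions} splits $F=E+iH$ pointwise into two null pieces using the unit Poynting direction $(E\times H)/|E\times H|$; Lemma~\ref{lemma:divergencefree} invokes Frobenius' theorem to show that when $E$ and $H$ are in involution these pieces remain divergence-free; Lemma~\ref{lemma:existsinvolution} manufactures an auxiliary $W\in D^\nu$ in involution with both $E$ and $H$, so that one first writes $F$ as a sum of two fields whose real and imaginary parts are in involution and then null-splits each, giving at most four $D^\nu$-null summands. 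None of this machinery is visible in momentum space, and your continuous helicity integral --- while a correct representation of the solution --- replaces a finite geometric decomposition inside $D^\nu$ by an uncountable superposition of objects lying outside $D^\nu$. That is a different, and for the paper's purposes much weaker, statement; in particular it cannot support the massive sector $H_F$ on which the rest of the paper is built.
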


The theorem will be proved through a sequence of lemmas.

\begin{lemma}
\label{lemma:sumoftwonullsolutions}
Any solution to Maxwell’s equations with initial data in $D^\nu$ may be written as a sum of (at most) two null solutions. For any $t_0$ the splitting is unique on $\text{supp}(F(t_0, x)) \cap \text{supp}(E(t_0, x) \times H(t_0, x))$.
\end{lemma}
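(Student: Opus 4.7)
My plan is to work entirely in the Riemann--Silberstein picture of section \ref{section:mathconsideration}: the free Maxwell system becomes $\partial_t v = -i\nabla\times v$, $\nabla\cdot v = 0$, for $v = E + iH \in \mathbb{C}^3$, and a null solution is exactly one with $v(t,x)\cdot v(t,x)\equiv 0$ in the complex bilinear form on $\mathbb{C}^3$. The strategy has three stages: a canonical pointwise null splitting of the Cauchy data at $t_0$; propagation of the two summands by free Maxwell; and verification that each propagated summand remains pointwise null at every $(t,x)\in\mathbb{R}^{1,3}$.

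For the pointwise algebra, at any point where $c := v\cdot v \neq 0$ I would write $v_j = \tfrac12 v \pm w$, reducing the null conditions to $w\cdot v = 0$ and $w\cdot w = -c/4$; this carves out a non-degenerate conic in the two-complex-dimensional hyperplane $v^\perp \subset \mathbb{C}^3$ and hence admits a one-parameter family of solutions. To pin down $w$ canonically I would use the real Poynting vector $P = E\times H$: the scalar triple-product identities give $E\cdot P = H\cdot P = 0$, hence $v\cdot P = 0$, so $P \in v^\perp$. Setting $w = (i/(2|P|))\,v\times P$, antisymmetry yields $w\cdot v = 0$, and the Lagrange identity $(v\times P)\cdot(v\times P) = (v\cdot v)(P\cdot P) - (v\cdot P)^2$ gives $w\cdot w = -c/4$, so the canonical splitting is $v_{1,2}(t_0,x) = \tfrac12 v \pm \tfrac{i}{2|P|}\,v\times P$. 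This is defined precisely on $\text{supp}(F(t_0,x))\cap\text{supp}(E(t_0,x)\times H(t_0,x))$, manifestly unique there up to the swap $v_1 \leftrightarrow v_2$, and a quick check shows it degenerates to $v_1 = 0$, $v_2 = v$ at points where $v$ itself is already null. Off the uniqueness set (where $P = 0$ inside the support) the pointwise splitting is ambiguous, and I would rely on any smooth extension permitted by the compact support of the data.

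Finally I would propagate $v_j(t_0,\cdot)$ by the free Maxwell system, so that $v_1(t,x)+v_2(t,x)=v(t,x)$ holds for all later times by linearity. The delicate step is that each $v_j(t,x)$ should remain pointwise null as it evolves. Differentiating, $\partial_t(v_j\cdot v_j) = -2i\,v_j\cdot(\nabla\times v_j)$, so what is needed is the propagated identity $v_j\cdot(\nabla\times v_j)\equiv 0$, which by the Mariot--Robinson theorem is precisely the statement that the repeated principal null direction singled out by $v_j$ generates a shear-free null geodesic congruence in $\mathbb{R}^{1,3}$. I would verify that the canonical initial data above is the Cauchy datum for exactly the two null congruences aligned with the principal null directions of $F$ at $t_0$, so that nullness propagates for all time. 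The main obstacle is this propagation step: the pointwise $\mathbb{C}^3$-algebra is mechanical, but upgrading the canonical pointwise-null Cauchy datum to a Maxwell solution that is null at every spacetime point is the geometric content of the lemma, and is where the Mariot--Robinson analysis of null congruences enters. Uniqueness of the splitting on $\text{supp}(F(t_0,x))\cap\text{supp}(E(t_0,x)\times H(t_0,x))$ then follows immediately from the explicit formula for $v_{1,2}(t_0,x)$.
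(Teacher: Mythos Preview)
Your pointwise construction is exactly the paper's. Writing $\hat P = (E\times H)/|E\times H|$, the paper sets
\[
F_{1,2} \;=\; \bigl(E \mp \hat P \times H\bigr) + i\bigl(H \pm \hat P \times E\bigr) \;=\; v \pm i\,\hat P \times v,
\]
which is your $2v_{2,1}$ (the paper carries an overall factor of two). The paper's proof stops there: it checks that $\mathrm{Re}\,F_j$ and $\mathrm{Im}\,F_j$ are pointwise orthogonal with equal magnitude, handles the degenerate set where $E\times H$ vanishes by continuity or trivial splitting, and declares uniqueness on $\mathrm{supp}(F)\cap\mathrm{supp}(E\times H)$ from the explicit formula. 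No time evolution appears; the lemma is used as a purely instantaneous statement, and the question of whether the pieces are themselves admissible Maxwell data is deferred to Lemma~\ref{lemma:divergencefree}.

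Where you diverge from the paper is in your third stage, and that is also where the genuine gap sits. First, you propagate $v_j(t_0,\cdot)$ by the free Maxwell system without checking the constraint $\nabla\cdot v_j = 0$; in general $\nabla\cdot(\hat P\times v)\neq 0$, and this is precisely why the paper needs the separate involution hypothesis of Lemma~\ref{lemma:divergencefree} before the split pieces become honest Cauchy data. Second, the Mariot--Robinson invocation does not do what you want: that theorem characterises null Maxwell fields \emph{already given on spacetime} via shear-free null congruences; it does not assert that a field which is null on one Cauchy slice and solves Maxwell remains null off the slice. Your own computation shows the obstruction is $v_j\cdot(\nabla\times v_j)$, which is a first-order differential condition on the initial data not implied by pointwise nullness. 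So your plan correctly identifies a real issue that the paper's proof of this lemma simply does not address, but your proposed resolution does not close it; in the paper's logic the propagation question is absorbed into the surrounding lemmas and the hyperbolic character of the system rather than settled here.
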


\begin{proof}
The proof is simply by construction. The field may be represented by $F = E + i H$, where $E$ and $H$ as well as their spatial Fourier transforms, $\overline{E}$ and $\overline{H}$, are vector fields on $\mathbb{R}^3$ under orientation preserving coordinate transformations. In the generic case one may write $F = F_1 + F_2$, where

\begin{align*}
F_1 &= E - \frac{E \times H}{\lvert E \times H \rvert} \times H + i\left(H + \frac{E \times H}{\lvert E \times H \rvert} \times E \right) \\
F_2 &= E + \frac{E \times H}{\lvert E \times H \rvert} \times H + i\left(H - \frac{E \times H}{\lvert E \times H \rvert} \times E \right) \\
\end{align*}
                         
Real and imaginary parts of these $F$’s are orthogonal at every point and have the same absolute value there. As a matter of fact

\begin{align*}
\frac{E \times H}{\lvert E \times H \rvert} \times \text{Im}(F_1) &= -\text{Re}(F_1) \\
\frac{E \times H}{\lvert E \times H \rvert} \times \text{Im}(F_2) &= \text{Re}(F_2) \\
\end{align*}

Should $E(t_0, x_0) \times H(t_0, x_0)$ be zero at the point $(t_0, x_0)$ but different from zero in a spatial neighbourhood $N(t_0, x_0) = N(x_0)$ of $x_0$, $F_1$ and $F_2$ may be extended by continuity.
If $E(t_0, x_0) \times H(t_0, x_0)$ is zero for all $x \in N(x_0)$, a splitting is trivial on $N(x_0)$.\\
However, the parts perpendicular to the (parallel) field vectors $E$ and $H$ have an arbitrary direction (that may be used to (non-uniquely) interpolate between points on the boundary between $\text{supp}(F(t_0, x))$ and $\text{supp}(E(t_0, x) \times H(t_0, x))$).
\end{proof}

\begin{lemma}
\label{lemma:divergencefree}
When two vector fields belonging to $D^\nu$, $E$ and $H$, on $\mathbb{R}^3$ are in involution on $S$, i.e. both are tangent to the same foliation of $S$, then the fields $F_1$ and $F_2$ constructed in the proof of lemma \ref{lemma:sumoftwonullsolutions} are divergence free.
\end{lemma}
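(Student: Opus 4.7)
The plan is to rewrite the splitting compactly as $F_1 = F + i\,\hat{n}\times F$ and $F_2 = F - i\,\hat{n}\times F$, where $F = E + iH$ and $\hat{n} = (E\times H)/|E\times H|$; this identification follows by direct expansion from the formulas in the proof of Lemma~\ref{lemma:sumoftwonullsolutions}. Since $E,H \in D^\nu$, we have $\text{div}\,F = \text{div}\,E + i\,\text{div}\,H = 0$, so
\[\text{div}\,F_{1,2} \;=\; \pm i\bigl(\text{div}(\hat{n}\times E) + i\,\text{div}(\hat{n}\times H)\bigr),\]
and the task reduces to showing that $\text{div}(\hat{n}\times E) = 0$ and $\text{div}(\hat{n}\times H) = 0$ separately.

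Invoking the vector identity $\text{div}(A\times B) = B\cdot\text{curl}\,A - A\cdot\text{curl}\,B$, each of these further reduces to
\[E\cdot\text{curl}\,\hat{n} \;=\; \hat{n}\cdot\text{curl}\,E, \qquad H\cdot\text{curl}\,\hat{n} \;=\; \hat{n}\cdot\text{curl}\,H.\]
The involution hypothesis is precisely the Frobenius integrability of $\operatorname{span}(E,H)$: the annihilating 1-form $\hat{n}^\flat$ satisfies $\hat{n}^\flat\wedge d\hat{n}^\flat = 0$, equivalently $\hat{n}\cdot\text{curl}\,\hat{n} = 0$, so $\text{curl}\,\hat{n}$ lies pointwise in the leaf tangent plane $\operatorname{span}(E,H)$.

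To exploit this structure I would pass to Gaussian coordinates $(u,v,w)$ adapted to the foliation, in which the leaves are $\{w = \text{const}\}$, $\hat{n} = \partial_w$, and the metric takes the block form $g = dw^2 + g^{(2)}_{\alpha\beta}(u,v,w)\,du^\alpha du^\beta$. In such coordinates $E$ and $H$ have vanishing $w$-component, and 3D divergence-freeness becomes 2D divergence-freeness on each leaf in the induced metric, while $\hat{n}\times E$ and $\hat{n}\times H$ are the in-leaf $90^\circ$ rotations of the restricted fields. The identities above then become a two-dimensional statement on each leaf, tying the rotation of a div-free field to the curl of the normal.

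The main obstacle is that the $90^\circ$ rotation of a single 2D divergence-free field is not generically divergence-free: writing $V = J\,\nabla\psi$ gives $JV = -\nabla\psi$ and $\text{div}_{(2)}(JV) = -\Delta_{(2)}\psi$. The argument must therefore exploit the \emph{pair} $(E,H)$ jointly spanning the integrable distribution, so that the couplings between $E\cdot\text{curl}\,\hat{n}$ and $\hat{n}\cdot\text{curl}\,E$ (and similarly for $H$) cancel through the combination of Frobenius integrability, divergence-freeness of both fields, and the specific form of $\hat{n}$ built from $E\times H$. Producing this cancellation explicitly, and then patching the result continuously across the nongeneric locus $\{E\times H = 0\}$ in the manner of the continuity extension at the end of Lemma~\ref{lemma:sumoftwonullsolutions}, is the technical heart of the proof.
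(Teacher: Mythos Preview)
Your proposal is a reduction, not a proof. The rewriting $F_{1,2}=F\pm i\,\hat n\times F$ and the reduction to the two scalar identities $E\cdot\operatorname{curl}\hat n=\hat n\cdot\operatorname{curl}E$ and $H\cdot\operatorname{curl}\hat n=\hat n\cdot\operatorname{curl}H$ are correct, and so is your observation that the in-leaf $90^\circ$ rotation of a single divergence-free tangent field is not divergence-free in general. But your final paragraph then merely asserts that the missing cancellation ``must'' appear once $E$ and $H$ are used jointly, and labels this the ``technical heart of the proof'' without carrying it out. That is precisely the content of the lemma, so as it stands you have restated the difficulty rather than resolved it. The obstacle you flagged is genuine: take a flat foliation with constant $\hat n$; then $\operatorname{curl}\hat n=0$ kills the left-hand sides of your identities, while $\hat n\cdot\operatorname{curl}E$ and $\hat n\cdot\operatorname{curl}H$ are unconstrained by divergence-freeness and tangency alone. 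Any argument along your lines must therefore exploit that $\hat n$ is the \emph{specific} field $(E\times H)/|E\times H|$, not just some unit normal to an integrable plane distribution, and you have not indicated how that enters.

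The paper does not pass through your curl identities at all. It argues with differential forms: setting $H_\perp=\hat n\times H$, Frobenius places $H_\perp$ in the leaf, so $(H,H_\perp,\hat n)$ is an orthogonal frame on the region where $E\times H\neq 0$; the paper then builds the top form $\omega=H_l\wedge H_{\perp l}\wedge\mathrm dN$ and reads off $\operatorname{div}H_\perp=0$ from an interior-product relation with this volume form, handling the degenerate set where $E$ and $H$ are parallel separately as trivial. Whatever one makes of that argument, it is a different mechanism from the vector-calculus cancellation you were hoping to produce, and it sidesteps rather than confronts the obstruction you isolated.
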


\begin{proof}
Wherever the vector fields are parallel (or either is zero) in an open subdomain $T$ of $S$ the lemma is trivial on this subdomain.
Otherwise the expression

\begin{equation*}
H_{\perp} := \frac{(E \times H) \times H}{\lvert E \times H \rvert}
\end{equation*}

is a vector field everywhere orthogonal to $H$ and of the same magnitude.
According to Frobenius theorem the assumptions imply that $H_{\perp}$ is tangent to the $2$-dimensional integral manifold $N$ of $E$ and $H$.
Thus the mutually orthogonal vector fields $H$, $H_{\perp}$ together with the normal to $N$ span the domain $\text{Int}(S \setminus T)$.
We denote by a lower index $l$ the one form metrically corresponding to a vector field. Then the maximal form $\omega := H_l \wedge H_{\perp l} \wedge \text{d}N$ is a volume form, since $E_l$ and $H_l$ never vanish.\\
Now obviously $i_{H_{\perp}} \omega = 0$, which implies $\text{div}(H_{\perp} ) = 0$. 
\end{proof}

\begin{lemma}
\label{lemma:maximalclosedsetinvolution}
Let $E$ and $H$ belong to $D^\nu(S)$ and $S$ be compact in $\mathbb{R}^n$. Assume further that $E$ and $H$ are in involution on a subset $B \subset S$ with $B$ not of measure zero.
Then there exists a maximal closed set $C$, $B \subset C \subset S$, such that $E$ and $H$ are in involution on $C$ and not in involution except on sets of measure zero in $S \setminus C$.
\end{lemma}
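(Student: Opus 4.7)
The plan is to bypass any Zorn-style construction and identify $C$ explicitly as the pointwise locus of involution, using the fact that this locus is automatically closed. Since $E$ and $H$ are smooth vector fields, the Lie bracket $[E,H]$ is smooth, and the condition $[E,H](p)\in\operatorname{span}(E(p),H(p))$ — equivalently, that the three vectors $E(p)$, $H(p)$, $[E,H](p)$ be linearly dependent — is the simultaneous vanishing of all $3\times 3$ minors of a smooth $3\times n$ matrix-valued function on $S$, hence a closed condition on the compact set $S$.

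Accordingly I would set
\[
C := \bigl\{\, p \in S : [E,H](p) \in \operatorname{span}(E(p),H(p)) \,\bigr\},
\]
and verify the required properties in turn. Closedness of $C$ is immediate from the minor argument just described, and $B\subset C$ follows directly from the hypothesis that $E$ and $H$ are in involution on $B$. For the maximality claim, suppose $C\subset C'\subset S$ with $C'$ closed and $E$, $H$ in involution on $C'$; then every point of $C'$ satisfies the defining pointwise condition of $C$, so $C'\subset C$, forcing $C'=C$. Finally, for every $p\in S\setminus C$ involution fails by construction, so the subset of $S\setminus C$ on which $E$ and $H$ \emph{are} in involution is empty and hence trivially of measure zero.

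The main obstacle I anticipate is a possible discrepancy between the pointwise algebraic notion of involution used above and the Frobenius sense invoked in lemma~\ref{lemma:divergencefree}, where a $2$-plane distribution of locally constant rank is required for integral manifolds to exist. On the closed subset of $S$ where $E$ and $H$ are collinear or either vanishes, involution is automatic because the distribution spanned has rank at most $1$; elsewhere $\operatorname{span}(E,H)$ has constant rank~$2$ on an open neighbourhood and Frobenius integrability is genuinely captured by the vanishing of the relevant minors. Hence $C$ decomposes naturally into a trivial part and a genuine Frobenius part, and the hypothesis that $B$ has positive measure guarantees that the Frobenius piece is non-degenerate, which is what makes the lemma useful in the subsequent analysis. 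If one prefers an abstract argument, Zorn's lemma applied to the closed subsets of $S$ containing $B$ on which involution holds (ordered by inclusion, with a chain's upper bound given by the closure of its union, which still lies in the pointwise involution locus because that locus is closed) yields the same $C$.
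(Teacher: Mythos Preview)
Your argument is correct and takes a genuinely different route from the paper's. The paper never touches the Lie bracket condition; instead it uses the smooth first integrals $e$ and $h$ (with $L_E e = L_H h = 0$, whose existence was argued earlier from completeness of the fields on $S$), chooses them so that $\mathrm{d}e = \mathrm{d}h$ on $B$, sets $A = \{p : \mathrm{d}e(p) = \mathrm{d}h(p)\}$, and defines $C$ as the closure of the interior of $A$. The measure-zero statement on $S\setminus C$ is then obtained via Sard's theorem, which forces the level sets of $e$ and $h$ to intersect transversally almost everywhere outside $C$. Your construction is more elementary: by taking $C$ to be the full pointwise involution locus you make the complement contain \emph{no} involution points, so the measure-zero claim is vacuous and Sard is unnecessary. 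The trade-off is that your $C$ may be strictly larger than the paper's, since it can include thin strata that the closure-of-interior operation discards; the paper's $C$ is by design a regular closed set tied directly to the level-set functions $e$ and $h$, which is precisely the data exploited in Lemma~\ref{lemma:existsinvolution} when the auxiliary field $W$ is built inside the intersections of level surfaces of $e$ and $h$. So your proof settles the lemma as stated, while the paper's version packages the output in the form actually consumed by the next step.
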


\begin{proof}
The assumptions imply the existence of smooth functions $e$ and $h$ such that $L_E(e) = 0$ and $L_H(h) = 0$.
$E$ and $H$ make up an integral system on $B$, so we may choose $e$ and $h$ such that $\text{d}e = \text{d}h$ on $B$.
Then there exists a non-extendible set $A$ such that $\forall p \in A : \text{d}e(p) = \text{d}h(p)$.
The closure of the interior of $A$ is the set $C$.
Sard’s theorem ascertains that $e$ and $h$ intersect transversally almost everywhere in $S \setminus C$.
\end{proof}

\begin{lemma}
\label{lemma:existsinvolution}
Given two elements, $E$ and $H$ of $D^\nu(S)$ with $S$ compact in $\mathbb{R}^n$, then there exists an element $W \in D^\nu(S)$ such that $W$ is in involution with both $E$ and $H$.
\end{lemma}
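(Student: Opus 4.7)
The plan is to construct $W$ by leveraging the first integrals $e$ and $h$, whose existence (with $L_E e = 0$ and $L_H h = 0$) was recorded just before Theorem \ref{theorem:sumofsourcefreenullsolutions}. The key observation is that whenever two vector fields are tangent to the level sets of one and the same smooth function, they automatically lie in a common codimension-one foliation, which is integrable, and so are in involution by the Frobenius criterion. Consequently it suffices to produce a divergence-free, compactly supported $W$ that is simultaneously tangent to the level sets of $e$ (forcing involution with $E$) and to those of $h$ (forcing involution with $H$).

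In $\mathbb{R}^3$ this is accomplished by setting
\begin{equation*}
W \;:=\; \nabla e \times \nabla h .
\end{equation*}
Then $W \cdot \nabla e = W \cdot \nabla h = 0$, so $L_W e = L_W h = 0$ and both involutions hold by the preceding observation. Smoothness is immediate from that of $e$ and $h$, and the classical vector identity
\begin{equation*}
\nabla \cdot (\nabla e \times \nabla h) \;=\; \nabla h \cdot (\nabla \times \nabla e) \;-\; \nabla e \cdot (\nabla \times \nabla h) \;=\; 0
\end{equation*}
delivers $\text{div}\,W = 0$ automatically. Compact support in $S$ follows provided $e$ and $h$ can be arranged so that $\nabla e$ and $\nabla h$ vanish outside compact subsets of $S$, typically by making $e$ and $h$ constant beyond tubular neighbourhoods of $\text{supp}(E)$ and $\text{supp}(H)$ respectively.

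The main obstacle is precisely this last point: producing smooth $e$, $h$ on the whole of $\mathbb{R}^3$ satisfying $L_E e = L_H h = 0$ whose gradients are compactly supported in $S$. Inside $\text{supp}(E)$ local flow-box coordinates for $E$ (available because $E$ is smooth, nowhere zero and divergence-free on its support) furnish local first integrals; a partition-of-unity argument over the compact set $\text{supp}(E)$ then patches these together and matches $e$ smoothly to a constant across $\partial\,\text{supp}(E)$, and the construction of $h$ is identical. The apparent worry that $W$ vanishes where $\nabla e$ and $\nabla h$ become parallel is no obstruction: by Lemma \ref{lemma:maximalclosedsetinvolution} that set sits inside the maximal involution set $C$, on which $E$ and $H$ are already in involution, and there the trivially-involutive $W = 0$ is perfectly admissible.
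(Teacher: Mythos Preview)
Your approach is essentially the same as the paper's: both constructions start from the first integrals $e$ and $h$ and take $W$ to be a vector field tangent to the joint level sets $\{e=\text{const}\}\cap\{h=\text{const}\}$, so that $L_W e = L_W h = 0$ and the Frobenius criterion forces $W$ into involution with $E$ (via the foliation by $e$) and with $H$ (via the foliation by $h$). The paper phrases this abstractly for $\mathbb{R}^n$, speaking of ``any smooth vector field $X$ contained in the $(n-2)$-dimensional intersections of $e+k$ and $h+l$'', whereas you write down the concrete realisation $W=\nabla e\times\nabla h$ that is available precisely in dimension three. Your divergence-free argument via the identity $\nabla\!\cdot(\nabla e\times\nabla h)=0$ is cleaner and more transparent than the paper's, which appeals somewhat loosely to $E$ and $H$ being tangent to $\partial C\cup\partial S$. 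On the involution set $C$ the paper instead takes $W$ equal to $E$ or $H$; your choice $W=0$ there is also admissible for the stated lemma, though you would want to check it does not obstruct the subsequent use in Theorem~\ref{theorem:sumofsourcefreenullsolutions} (it does not, since on $C$ the fields are already in involution and no splitting is needed). The support discussion---arranging $e$ and $h$ to be constant outside $S$ so that $\nabla e,\nabla h$ are compactly supported---is handled at the same level of rigour in both arguments. The only loss in your version is the restriction to $n=3$, which is harmless here since the application is to Maxwell fields on $\mathbb{R}^3$.
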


\begin{proof}
The proposition is trivial for $n = 1,2$. For $n \ge 3$ the assumptions imply that that the vector fields, $E$ and $H$ are complete.
Accordingly there exist smooth functions $e, h : \mathbb{R}^n \rightarrow \mathbb{R}$ such that $E \cdot \text{d}e = H \cdot \text{d}h = 0$.
In case one can find $e$ and $h$ such that $\text{d}e = \text{d}h$ on some open submanifold of $S$, then there is a largest closed set $C \subseteq S$, such that $\text{d}e = \text{d}h$ on $C$ according to lemma \ref{lemma:maximalclosedsetinvolution}.
Both $E$ and $H$ are tangent to $\partial C$ from both sides.
On $C$ there is nothing to prove since either $E$ or $H$ may be taken as $W$ (restricted to $C$).
On $S \setminus C$, $E$ and $H$ intersect transversally almost everywhere.\\
Consider the integral curves of any smooth vector field $X$ contained in the $(n-2)$-dimensional (except on sets of zero measure) intersections of $e+k$ and $h+l$ in $\text{Int}(S \setminus A)$, where $k$ and $l$ denote constants.
Let $W$ be any smooth vector field having the same integral curves as $X$. $W$ is defined everywhere in $S$, $W$ is complete since $L_W(e) = L_W(h) = 0$ and $W$ preserves the volume element in $\text{Int}(S \setminus A)$ since both $E$ and $H$ are tangent to $\partial C \cup \partial S$. 
\end{proof}

\begin{proof}[Proof of Theorem \ref{theorem:sumofsourcefreenullsolutions}]
Assume we are given a compactly supported $C^\infty$ source free solution to Maxwell’s equation.
$S$ is the support of this solution at $t = 0$. Consider the boundary of $S$.
Parts of $\partial S$ may enclose open connected sets. Since $S$ is compact with (piecewise) smooth boundary and the solution is smooth, the number of such open connected sets $\Omega_i \in S$ is countable.
The field restricted to $\Omega_i$ consists of two divergence free vector fields on $\mathbb{R}^3$, which we will call $E$ and $H$.
According to lemma \ref{lemma:existsinvolution} we may write the field as a sum of two fields in involution:\\

\begin{equation*}
F = E + i H = [E + iW + i(H + iW)] + [E - iW + i(H - iW)]
\end{equation*}

These two fields may be separately orthogonalised according to lemmas \ref{lemma:sumoftwonullsolutions} and \ref{lemma:divergencefree}.
Thus $F \rvert_{\Omega_i}$ may be written as a sum of at most four null fields.
\end{proof}

A null field as defined in theorem \ref{theorem:sumofsourcefreenullsolutions} will subsequently be called a \emph{fundamental entity}.\\

The above theorem describes possibilities to decompose $C^\infty$ source free compactly supported electromagnetic fields into pairs of $C^\infty$ compactly supported orthogonal divergence free vector fields in involution.
However already on this classical level, we may recognise a classification problem regarding fundamental entities (FE) i.e. the elements of the Hilbert space $H_F$ obtained by completing the space spanned by $C^\infty$ compactly supported orthogonal divergence free vector fields in involution.
We have here chosen a seemingly natural one, induced by the requirement on the pre-Hilbert space elements that the fields have $C^\infty$ components and regular compact support $S$.
This forces the traces of the FE's to be zero on $\partial S$. Thus a FE, i.e. an element of $H_F$ is characterised as follows:

\begin{enumerate}
\item The FE’s have rest systems. As a matter of fact they are at rest in any frame that is coupled to the rest frame through a Lorentz transformation. In general they also have energy and may be considered to have mass = energy in a rest system. Disregarding the conserved orientation, they transform according to a massive representation of $L_+^\uparrow$.
\item It has a connected compact support and its interior is also connected. In other words it is a bounded connected domain for two orthogonal, equal magnitude, smooth and divergence free vector fields. The support is independent of time in the rest frame.
\item At time $t = 0$ (or at any fixed time) the interior of the support is foliated. There is one exact form annihilating both vector fields. The foliation is not preserved in time.
\item The $L^2$ norm of the vector fields (the energy) is independent of time. So is any spatial moment of the energy.
\item The orientation of the vector fields (relative to the normal to the foliation) is independent of time.
\item The components of the vector fields are harmonic functions on $\mathbb{R}^{1, 3}$.
\item The FE’s are covariant with respect to the Lorenz group.
\end{enumerate}

Since Maxwell’s equations are linear, possibly inhomogeneous, we may consider solutions to them as a sum of an arbitrary solution to the homogenous equation and some particular solution. We have just shown that a solution to the homogenous equation with $C^\infty(\mathbb{R}^3)$ initial functions having compact support with non-empty interior is an element of the Hilbert space spanned by the FE’s with the $L^2$ norm.\\
Now clearly the source free Maxwell equations will have solutions even if the initial functions are not $C^\infty(\mathbb{R}^3)$.
The prime examples are given by the plane waves. Thus we are led to consider singularly supported initial values.
When at $t = 0$ the initial vectors are supported on a two-dimensional surface $N$ in $\mathbb{R}^3$, the condition that both $E$ and $H$ are divergence free means that these vectors both are tangent to $N$.
Tangent vectors are the only ones preserving the volume form on $N$.
Here we have assumed that the singularly supported vectors are $C^\infty$ on $N$.
They are clearly in involution and may be split into a pair of null solutions, each with its own orientation (helicity).
What singular surfaces $N$ can be supporting solutions to the homogenous equations?
$N$ is a two-sided wave-front set. If $N$ has curvature it cannot at all times support a homogenous solution.\\
Thus plane waves are the only singularly supported solutions to Maxwell equations.\\
Homogenous as well as inhomogeneous solutions obviously satisfy $(\text{d}\delta + \delta\text{d})F = 0$, i.e. they are harmonic.
This means that a complex vector field whose components satisfy the wave equation is a solutions to Maxwell’s equations.
Furthermore if both the real and imaginary parts are divergence free it is a solution to the homogenous equations.\\
Consider compactly singularly supported solutions that are $C^\infty$ in directions along $N$, i.e. whose components belong to $D(N)$.
These solutions are dense in a Hilbert space $H_P$ with norm $\int \limits_N \langle F_\phi, F_\phi \rangle \text{d}^2x$ where the components of the complex vector $F_\phi$ now are singularly supported on $N$.
Obviously these solutions may as in the smooth case be resolved into null fields, which we will call \emph{photons}.\\

Denote a solution belonging to the “massive” Hilbert space $H_F$, the completion of the space of FE’s with initial values in $D^\nu(S)$ in the $L^2$ norm of vector fields on $\mathbb{R}^3$, by $\phi$ and a solution belonging to the massless “wave front” Hilbert space $H_P$ by $\psi$.
Note that $H_F$ and $H_P$ have no elements in common (except for the zero vector), since $\psi \notin L^2(\mathbb{R}^3)$ and $H_F$ is the completion in $L^2(\mathbb{R}^3)$ of smooth FE’s.
Thus both of these Hilbert spaces must be considered in the quantisation procedure. It is a special feature of $\mathbb{R}^3$ that the solutions of free gauge field equations are limited to $H_F$ and $H_P$.\\

Solutions to the inhomogeneous Maxwell equations, $\text{d}F = 0; \; \delta F = j$ (with $\text{d}j = 0$ as a condition for a solution to exist) may be obtained as usual by considering \emph{fundamental solutions}.
If in a coordinate system a distribution $E$ satisfies Maxwell’s equations in the form $M(D)E = \delta I$, where $M(D)$ is a $6 \times 6$ matrix of differential operators (c. f. the explicit form at the beginning of this chapter) and $I$ is a unit matrix, then $E$ is a fundamental solution.
By standard results $E$ has support in a cone \cite{source:hormander1983analysis} (here $t > 0$). We denote the fundamental solutions to Maxwell’s equations by $E_+$ and $E_-$ with supports in  $t > 0$ and $t < 0$ respectively.
Thus $F_p = E_+ \ast j$ is a solution for $t > 0$. However a (distributional) solution to the homogenous equation is given by $F_h = E_+ - E_-$.
These expressions are not particularly transparent, but clearly display the difference between free and non-free solutions.\\

So far we have established that the classical linear solution space to the source free Maxwell equations consists of the two Hilbert spaces $H_F$ and $H_P$, both of which exhibit particle like characteristics.
In order to describe the geometry of the state space we need, apart from scalar products within $H_F$ and $H_P$, the scalar product between elements of the two spaces.\\
$H_P$ is spanned by objects supported on planes of form $n \cdot x - t = 0$ where $n$ is the unit normal to $N$ and $t \in \mathbb{R}$.
Consider a compactly supported element $\phi$ of $H_F$ in its rest frame with support $S_\phi$.
For some compact interval $I$ of the real line, $t \in I \subset \mathbb{R}$, $N \cap S$ is nonempty.
Chose an element $\psi$ of $H_P$ with support $S_\psi \subset N$.
Whenever $S_\phi$ and $S_\psi$ have a non-empty intersection, the formal expression

\begin{align}
\label{eq:mixedScalarProduct}
\int \limits_{\mathbb{R}^3} \langle \phi | \psi \cdot \delta(N) \rangle \text{d}^3 x &=
\int \limits_{\mathbb{R}^3} \langle \phi | \psi \cdot \delta(n \cdot x - t) \rangle \text{d}^3 x =
\int \limits_{\mathbb{R}^3} \langle \delta(N) (\phi - (n \cdot \phi) n) | \psi \rangle \text{d}^3 x \nonumber \\
&= \int \limits_{N} \langle (\phi - (n \cdot \phi) n) \rvert_N | \psi \rangle \text{d}^2 x
\end{align}
 
is bounded and different from zero. We denote this object tentatively as $\langle \phi || \psi \rangle$ and note that from its definition it satisfies $\lvert \langle \phi || \psi \rangle \rvert \le \| \phi \| \| \psi \|$.
Thus the space of solutions to the homogeneous Maxwell equations form a pre-Hilbert space with elements $\alpha \phi + \beta \psi$, where $\alpha$ and $\beta$ are complex numbers and $\phi \in H_F$, $\psi \in H_P$.
This solution space may be completed to a Hilbert space $H$ in the norm $\| \phi + \psi \|_H = \| \phi \|_{H_F} + \| \psi \|_{H_P} + 2 Re \langle \phi \, || \, \psi \rangle$.\\
It is furthermore clear from equation \ref{eq:mixedScalarProduct} and Riesz's lemma that there exists in $H_F$ a unique element $\xi$ corresponding to the functional $\psi \delta(N)$,
i.e $\langle \phi \, || \, \psi \delta(N) \rangle = \langle \phi \, || \, \xi \rangle$.
Similarly there of course exists in $H_P$ a unique element corresponding to the functional $(\phi - (n \cdot \phi) n) \rvert_N$.\\
Canonical quantization \cite{source:berezin1966method} of the free Maxwell theory thus leads to the symmetric Fock space over $H$, i.e $\oplus_0^\infty S H^{\otimes n}$, as the state space (for the number representation).
We then arrive at a free field theory satisfying the Gårding-Wightman requirements.
Note however that the number operator in the second quantised theory counts the sum of FE’s and photons.
Should we insist on a physical two-field interpretation, i.e should it be possible to count FEs and photons separately, such a Gårding-Wightman theory were nontrivial.

\section{Where is the Physics?}
The solutions to the homogenous Maxwell equations presented in section \ref{section:mathconsideration}. can, from a mathematical standpoint, hardly be regarded as strange.
Physicists with an experience in field theory, classical or quantum, should at least find the FE’s rather exotic.
These entities may be regarded at rest in any Lorentz frame, in particular in the rest system of any massive “particle”.
Maxwell’s equations are first order and the solutions are accordingly uniquely determined by the conditions at an initial time.
A FE is essentially determined by its support at a particular time, its orientation property and its energy.
All these properties, including the support, are then unchanged in time in the coordinate system where they were determined.
“Viewed” from a photon a FE “looks like” another photon.
We could argue that a FE hides from massive observers.\\
Quantum field theory requires the propagator, the difference between forward and backward fundamental solutions, to be “observable”.
As I am considering a free theory, the question of what is observable of course becomes rather philosophical.
However, if we consider the state space of compactly supported solutions (in the space variables) to the homogenous Maxwell equations, such a state is described by superposition of FE's and photons.\\

A FE has the same energy (rest-mass) in any orthonormal coordinate system of $\mathbb{R}^{1,3}$. Its centre of energy can be set to be the origin in any system. However it’s shape, i.e. it’s support does change. When boosted a FE shrinks in the direction of the boost and the volume diminishes. The energy density increases, retaining a constant energy. This remains true even if the speed of the boost approaches $1$. A FE remains a particle-like object, having mass and a centre of mass. In contrast a photon possesses a centre of energy. It has no mass, but it carries momentum.\\
In order to be able to verify the physical existence of FE’s, we would either have to use gravitation theory or be able to observe what happens locally, when a charged particle (or better a pair of oppositely charged particles) comes in the vicinity of a FE. \\
Should we consider the FE’s and the photons to be physically different fields, we somehow have to live with the non-vanishing “transition probability”\footnote{The proper quantum mechanical interpretation is of course that neither the quantum number operator for the FE’s, nor that for the photons, (were they defined) has a static expectation value. This is true only for the sum of these operators.} between elements of $H_F$ and $H_P$.
This would be an example of a non-trivial Gårding-Wightman theory.
The continuous spectrum of the unitary representation of the translation operator makes it differ though from hitherto presented models.

\emph{Speculations:} The cosmologists claim that there is in our galaxies a deficit of visible mass. In interstellar space, far away from particle sources and where net gravitational effects are small, one could imagine FE’s to be sizeable. Localised electromagnetic energy could thus possibly explain the prevalence of “dark matter”. Since to my knowledge FE’s are not observed as stable objects on earth, one may speculate that there are somehow too many particles around that make these objects unstable through influence from charges or microscopic curvature variations. Should this be the case, we have to expect FE’s on earth to be very low energy phenomena. Should FE’s not be observable, and dark matter actually be associated to some other source, we were forced to conclude that Maxwell’s equations are not quite correct when gravity is considered. Finally, if the existence of physical FE’s is demonstrable, these should play a role in cosmological entropy considerations. The entropy of a FE seems at first sight to be larger than the entropy of the more singular photon.

\chapter{Free Yang-Mills Fields}
\label{chapter:yangmills}
Yang-Mills fields are fields identified with the curvature form on a principal fibre bundle with a smooth structure group $G$. The situation here is somewhat different from the electromagnetic field treated in the earlier section. Maxwell’s equations (equations \ref{eq:maxwelldF} and \ref{eq:maxwellSdSF}), were derived from physical observations only. The formalism exhibits that in the Maxwell case one may regard the two-form $F$ as a curvature form of a $U(1)$ bundle over Minkowski space, turning equation \ref{eq:maxwelldF} into an identity. Equation \ref{eq:maxwellSdSF} is then an equation of motion derived through Lagrangian formalism. The Lagrangian density is here taken to be the (trace of) the curvature squared. In a remarkable paper \cite{source:yang1954conservation} Yang and Mills proposed an analogous setup for an isospin multiplet.\\
As in section \ref{section:mathconsideration} I will present the results in terms of vector fields. It is assumed that the principal fibre bundle $(P, \mathbb{R}^{1,3}, G, \pi, \{U_k, \phi_k\})$ is equipped with a connection form $\omega$ or, equivalently, with a G-connection $\sigma$. For some purposes it is advantageous to regard $\omega$ as a collection of one-forms, the number of which equals $m := \text{dim}(G)$. For the notation employed see the appendix. $\omega$ annihilates horizontal vectors. Through the connection a metric tensor $\Gamma$ is defined on $P$,

\begin{equation*}
\Gamma(X,Y) = \Gamma(X_h + X_\nu, Y_h + Y_\nu) = \Gamma(X_h,Y_h) - K(X_\nu, Y_\nu).
\end{equation*}

The particular horizontal vector field defined as the lift of $\partial_t = \frac{\partial}{\partial t}$, (where the coordinates on $\mathbb{R}^{1,3}$ are taken as $(t,x)$ with $x$ an element of $\mathbb{R}^3$) may be integrated to a one-dimensional foliation orthogonal to the submanifold hypersurfaces $t = t_0 \; \forall t_0 \in \mathbb{R}$. The Bianchi identity locally takes the form

\begin{equation*}
\text{D}\Omega = \sum \limits_{\lambda < \mu < \nu} (\nabla_\lambda F_{\mu \nu} + \nabla_\nu F_{\lambda \mu} + \nabla_\mu F_{\nu \lambda}) \text{d}x^\lambda \wedge \text{d}x^\mu \wedge \text{d}x^\nu = 0
\end{equation*}

where

\begin{equation*}
F_{\mu \nu} = \frac{\partial A_\nu}{\partial x^\mu} - \frac{\partial A_\mu}{\partial x^\nu} + [A_\mu, A_\nu]
\end{equation*}

and $A$ is a Lie algebra valued one-form on a local trivialisation $(\pi^{-1}U_k, U_k, G, \pi, \{\phi_k\})$.
The relation between $A$ and $\omega \rvert_{\pi^{-1}U_k}$ is

\begin{equation*}
\omega(x, g) = \omega_{MC}(g) + g A_\mu(x) g^{-1} \text{d}x_k^\mu
\end{equation*}

The indices of the form $F$ are raised and lowered with the inverse of the metric tensor on the base.\\
Picking up the thread from chapter \ref{chapter:electromagnetism} we now consider the (free) field $F = \Omega$ as a Yang-Mills field satisfying a Bianchi identity $\text{D}F = 0$ and an Euler-Lagrange equation $\text{D} \star F = 0$.
The star operation is defined here through the Minkowski metric on the base $\mathbb{R}^{1,3}$.
As in chapter \ref{chapter:electromagnetism} we write the covariant and the contravariant tensors respectively as

\begin{equation*}
F_{\bullet \bullet} = 
\begin{pmatrix}
0 & E_1 & E_2 & E_3 \\
-E_1 & 0 & H_3 & -H_2 \\
-E_2 & -H_3 & 0 & H_1 \\
-E_3 & H_2 & -H_1 & 0 \\
\end{pmatrix}
\qquad
F^{\bullet \bullet} = 
\begin{pmatrix}
0 & -E_1 & -E_2 & -E_3 \\
E_1 & 0 & H_3 & -H_2 \\
E_2 & -H_3 & 0 & H_1 \\
E_3 & H_2 & -H_1 & 0 \\
\end{pmatrix}
\end{equation*}

where now the “vectors” $E =  -(E_1, E_2, E_3)$ and $H = (H_1, H_2, H_3)$ are Lie algebra valued.
These objects transform as vectors under $SO(3)$. The equations $\text{D}F = 0$ and $\text{D} \star F = 0$ are each $m$ equations.
These equations are homogenous and hyperbolic and thus determined by the restriction of $F$ to the submanifold $t = 0, F \rvert_{t = 0}$. 

\begin{theorem}
Any solution to the equations $\text{D}F = 0$ and $\text{D} \star F = 0$, where $F$ is the curvature form on a smooth principal bundle $(P, \mathbb{R}^{1,3}, G, \pi, \{U_k, \phi_k\})$ over Minkowski space with smooth structure group $G$, projection $\pi$ and a smooth G-connection $\sigma_p$, such that the Lie algebra components of the initial vectors $E_0 := E(0, x)$ and $H_0 := H(0, x)$ belong to $D(\mathbb{R}^3) \times G$, may be uniquely written as a finite sum of lifts of source free null solutions on the base.
\end{theorem}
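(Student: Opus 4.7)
The plan is to reduce the Yang-Mills problem to the abelian Maxwell case treated in Theorem \ref{theorem:sumofsourcefreenullsolutions} by a suitable gauge choice, apply the electromagnetic decomposition componentwise in a basis of $\mathfrak{g}$, and then lift the resulting base null solutions back to $P$.

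First, working in a local trivialisation $(\pi^{-1}U_k, U_k, G, \pi, \phi_k)$, I would exploit the gauge freedom to choose a section in which the connection one-form $A$ vanishes on the initial Cauchy slice $\{t = 0\}$; such a slice-adapted gauge can be arranged by a smooth gauge transformation without changing the curvature $F$. On this slice the covariant exterior derivative $\mathrm{D}$ coincides with $\mathrm{d}$, so the constraint pieces of $\mathrm{D}F = 0$ and $\mathrm{D}\star F = 0$ reduce to the familiar abelian equations $\mathrm{div}\,E_0^a = 0$ and $\mathrm{div}\,H_0^a = 0$ for every Lie-algebra component $a = 1, \dots, m$ with respect to a chosen basis $\{T_a\}$ of $\mathfrak{g}$. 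The hypothesis that the components of $E_0$ and $H_0$ lie in $D(\mathbb{R}^3) \times G$ then places each pair $(E_0^a, H_0^a)$ in $D^\nu$ for its compact support.

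Next, I would apply Theorem \ref{theorem:sumofsourcefreenullsolutions} independently to each of the $m$ pairs $(E_0^a, H_0^a)$. This yields, for every $a$, a decomposition of the initial data into at most four null Maxwell solutions on $\mathbb{R}^3$, producing an overall total of at most $4m$ null fields, i.e. a finite sum. Each such field extends uniquely to a global null Maxwell solution on the base $\mathbb{R}^{1,3}$ by hyperbolic Cauchy evolution, and can be lifted to a $\mathfrak{g}$-valued curvature two-form on $P$ by tensoring with the corresponding direction $T_a$ and pulling back via the chosen section.

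The main obstacle is to verify that this finite sum still satisfies the full non-linear system $\mathrm{D}F = 0$, $\mathrm{D}\star F = 0$ for $t \neq 0$, where the commutator terms $[A_\mu, F^{\mu\nu}]$ are a priori nontrivial and no abelian splitting is given for free. The route I would pursue is to observe that each summand has the form $F_0^{(a)} \otimes T_a$ with $F_0^{(a)}$ a scalar Maxwell null field: the lifted potential producing this curvature can itself be chosen to lie in the single Lie algebra direction $T_a$, so that $[A,F]$ vanishes identically summand-by-summand and the covariant Yang-Mills equations collapse to the ordinary Maxwell equations for each $F_0^{(a)}$. Uniqueness of the overall decomposition then follows from Cauchy uniqueness for the first-order hyperbolic system together with the uniqueness clause of Theorem \ref{theorem:sumofsourcefreenullsolutions} applied to each of the $m$ components.
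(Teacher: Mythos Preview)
Your gauge-fixing step contains a genuine error. Requiring the local connection one-form $A$ to vanish on the entire Cauchy slice $\{t=0\}$ forces the pulled-back curvature to vanish there as well: $F_{ij}(0,x) = \partial_i A_j - \partial_j A_i + [A_i,A_j]$, and every term is zero once $A_k(0,\cdot)\equiv 0$. Your gauge choice thus imposes $H_0 = 0$, contrary to hypothesis. One may always achieve temporal gauge $A_0 = 0$, or trivialise $A$ along a single curve, but not across a three-dimensional slice without killing the magnetic initial data; the obstruction is precisely that $H_0$ \emph{is} the curvature of the induced connection on the slice. Consequently the step ``on this slice $\mathrm{D}$ coincides with $\mathrm{d}$'' is unavailable in general, and the reduction to abelian $\mathrm{div}\,E_0^a = \mathrm{div}\,H_0^a = 0$ fails as written.

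The paper avoids gauge choices entirely. It works with the horizontal lifts $\prescript{i}{}{E}^\alpha \nabla_\alpha$ and $\prescript{i}{}{H}^\alpha \nabla_\alpha$ on $P$ and pushes them to $\mathbb{R}^3$ via $\pi'$; the covariant constraints $\sum_\alpha \nabla^\alpha E_\alpha = 0 = \sum_\alpha \nabla^\alpha H_\alpha$ are asserted to descend to ordinary divergence-freeness of the projections $\prescript{i}{}{E}_\pi$, $\prescript{i}{}{H}_\pi$. Theorem~\ref{theorem:sumofsourcefreenullsolutions} is then applied to each of the $m$ component pairs on the base, and the resulting null pieces are lifted back via the $G$-connection, $F(p) = \sum_i l_i\,\sigma_p\bigl(\prescript{i}{}{F}_\pi(x)\bigr)$. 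Note also that the theorem only claims a decomposition of the given $F$ into lifts of base null solutions, not that each summand is itself a Yang-Mills curvature; the paper carries out the argument purely at the level of initial data (invoking hyperbolicity for sufficiency), so the commutator issue you flag in your final paragraph is never confronted and your proposed ``abelian direction'' workaround is neither needed nor, for the reasons above, correct.
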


\begin{proof}
It is enough to show that $F_0$ may be written in such a way.
In coordinates $F = \prescript{i}{}{F}_{\mu \nu} \text{d}x^\mu \text{d} x^\nu l_i$, where $l_i$ may be any local basis for the Lie algebra at $g = e$.  Then

\begin{align*}
\text{D} F &= 0 && \Rightarrow & \sum \limits_{\alpha = 1}^3 \nabla^\alpha H_\alpha &= 0 \\
\text{D} \star F &= 0 && \Rightarrow & \sum \limits_{\alpha = 1}^3 \nabla^\alpha E_\alpha &= 0 \\
\end{align*}

These equations are valid for all $t$ and are thus also restrictions on the values of $H$ and $E$ at $t = 0$.
Consider the $m = \text{dim}(G)$ horizontal vectors $\prescript{i}{}{E}^\alpha \nabla_\alpha$ and similarly $\prescript{i}{}{H}^\alpha \nabla_\alpha$ where $i \in \{1, ..., m\}$.
The assumptions imply that each of these vectors have divergence free projections

\begin{equation*}
\text{div} \prescript{i}{}{E}_\pi = \text{div}( \pi' \prescript{i}{}{E}^\alpha \nabla_\alpha) = 0
\qquad
\text{div} \prescript{i}{}{H}_\pi = \text{div}( \pi \prescript{i}{}{H}^\alpha \nabla_\alpha) = 0
\end{equation*}

According to Theorem \ref{theorem:sumofsourcefreenullsolutions}, $\prescript{j}{}{F}_\pi = \prescript{j}{}{E}_\pi + i \prescript{j}{}{H}_\pi$ may be written as a finite sum $\prescript{i}{}{F}_\pi = \prescript{i}{}{F}_\pi^1 + \prescript{i}{}{F}_\pi^2 + ...$ where $\forall n : \prescript{i}{}{F}_\pi^n \cdot \prescript{i}{}{F}_\pi^n = 0$.\\
Then $F$ may be written as

\begin{equation*}
F(p) = \sum \limits_{i = 1}^m l_i \sigma_p( \prescript{i}{}{F}_\pi(x) )
\end{equation*}

where $p \in \pi^{-1}x$.
\end{proof}

It is evident that the quantum version of a free gauge theory has exactly the same characteristics as it has in the Maxwell case, since horizontal fields are isometric to their projections on the base.\\
\emph{Comment:} The special properties of $\mathbb{R}^3$ are as important here as in the Maxwell case.\\
\emph{Discussion:} It was noted already in the introduction that the term “free field”, used in the characterisation of examples of Gårding-Wightman theories, is not well defined in quantum theories. The same of course goes for the term “trivial”, whenever there is not enough structure for an S-matrix to be defined.\\
The model systems investigated in this paper are “free fields” in the classical sense.
The fields are classically just the homogeneous solutions to linear hyperbolic PDE's.
The particle aspect of fields is a quantum phenomenon and in the second quantised version of a gauge field the massless and the (continuously) massive parts have to represent different kinds of “particles”.
However, from a quantum mechanical point of view, these “particles” represent different states of the same field.
In the traditional Fock representations the “one particle states” for different kinds of “particles” are orthogonal and in this sense is also the number operator diagonal, a situation obviously not prevailing in a quantised gauge theory.
I would suggest that we use the term “free fields” for fields that classically are represented by solutions to homogeneous linear PDE's. Similarly I propose the use of the descriptor “trivial theory” to be reserved for theories where the S-matrix exists and equals unity. (The existence of asymptotic states is not enough to guarantee the existence of an S-matrix.)

\begin{appendices}

\cleardoublepage
\addchap{Appendix}
\renewcommand\thesection{\Alph{section}}
\numberwithin{equation}{section}

\section*{Some (condensed) Background Material}
Gauge fields and in particular Yang-Mills fields are defined on fibre bundles over Minkowski space $\mathbb{R}^{1,3}$. The considered fields are then the curvature forms or their potentials, the connection forms, of a principal fibre bundle. Thus the classical electromagnetic field is identified with the curvature form of a principal $U(1)$ bundle over $\mathbb{R}^{1,3}$. Similarly curvature or connection forms on smooth principal bundles over $\mathbb{R}^{1,3}$ are termed Yang-Mills fields. Below follows a short overview of the concepts involved.\\
A smooth \emph{principal fibre bundle}, $\mathcal{P} : (P, B, G, \pi, \{U_k, \phi_k\})$ consists of the following data:\\
$P$, called the total space, and $B$, the base space, are smooth $(C^\infty)$, Hausdorff manifolds.
$B$ is para-compact. $G$ is a smooth Lie group having a smooth free left action $L : G \times P \rightarrow P$, $L(g, p) := L_g p = g(p)$ on $P$.
The space of orbits $P / G = B$ is the base space and the mapping $\pi : P \rightarrow P / G = B$ is a smooth surjective map with a Jacobian of everywhere maximal rank, $\text{dim}(B)$.
For any $x \in B$, $\pi^{-1}(x)$ is a smooth submanifold of $P$ isomorphic to $G$. $\{(U_k, \phi_k) : k \in J \subseteq \mathbb{N}\}$ is a family of open sets $\{U_k\}$ covering $B$ together with diffeomorphisms $\phi_k : \pi^{-1} U_k \rightarrow U_k \times G$ (local trivialisations) such that $\pi \circ \phi_k^{-1} : U_k \times G \rightarrow U_k$ is the projection on the first factor.
Furthermore, these diffeomorphisms define smooth mappings $g_k : \pi^{-1}U_k \rightarrow G$ by $\phi_k(p) = (\pi(p), g_k(p))$ for all $p \in P$.
These mappings are equivariant, i.e. $g_k(R_g(p)) = g_k(p)g_k^{-1}$ for all $p \in P$ and all $g \in G$.\\
If $U_k \cap U_l : k \neq l$ is nonempty, two, in general different, trivialisations over this region exist.
Since for any $p \in P$ both $g_k(p)$ and $g_l(p)$ are elements of $G$, there exist group elements $t_{kl}(p) = g_l(p)g_k^{-1}(p)$.
Due to the equivariance of $g_k(p)$, the group elements $t_{kl}(p)$ are constant along the fibres and the \emph{transition functions} $t_{kl} : U_k \cap U_l \rightarrow G$ are local sections.
(Global sections of a bundle exist if and only if $\mathcal{P}$ is homeomorphic to a product bundle.
Thus there exist sections on $(\pi^{-1} U_k, U_k, G, \pi, \{\phi_k\})$ called \emph{local sections}.
In particular let $s_k : U_k \rightarrow G, \, s_k(b) \rightarrow e$, then $\Sigma_k : U_k \rightarrow \pi^{-1} U_k, \, \Sigma_k(b) = \phi_k^{-1}(b, s_k(b)) = \phi_k^{-1}(b, e)$ is a local section and $\pi \circ \Sigma_k = \text{id} : B \rightarrow B$.)\\

The definition of $\pi$ allows the definition of \emph{vertical vectors} as those belonging to the kernel of $\pi'$.
The left free action of $G$ on $P$ allows the construction of \emph{left invariant vertical vector fields} as $\lambda_X(p) = \frac{\text{d}}{\text{d}t}(L_{e^{tX}} (p))$ for all fixed $X \in \mathbf{g}$, the Lie algebra of $G$.
We define \emph{right invariant vector fields} as $\rho_X(p) = -\frac{\text{d}}{\text{d}t}(R_{e^{tX}} (p))\rvert_{t = 0}$.
(The minus sign is slightly un-aesthetic, but has some advantages. Consider $\iota : G \rightarrow G$ by $\iota(g) = g^{-1}$. $\iota' : TG \rightarrow TG$ preserves Lie brackets and $\iota' \lambda_X = \rho_X$.
We may thus identify the Lie algebra of $G$ with the right as well as the left invariant vector fields.)
The common notation $e^{t X} := V(t)$ denotes the one-parameter subgroup $V(t) \in G$, defined on some interval containing $0$, with $V(0) = e$, the unit of $G$, satisfying $V(t)^{-1} \frac{\text{d} V(t)}{\text{d} t} \rvert_{t = 0} = X \in \mathbf{g}$.\\
Both left invariant and right invariant fields are vertical.
The \emph{fibre} $\pi^{-1}(b)$ may for any $b \in B$ be endowed with a measure by choosing a non-vanishing maximal ($\text{dim}(G)$) form $\beta$ at some arbitrary point $p_0 \in \pi^{-1}(b)$.
For $p = R_g p_0$ define $\beta(p) = -R_{g^{-1}}^* \beta(p_0)$, making $\beta$ a right invariant, nowhere vanishing maximal form on $\pi^{-1}(b)$ defining an orientation on each fibre.\\
A \emph{connection} (G-connection) $H$ on a principal fibre bundle $\mathcal{P}$ is a field of smooth mappings $\sigma : P \times TB \rightarrow TP$ with $\sigma(p, \cdot) := \sigma_p : T_{\pi(p)} B \rightarrow T_p P$, fulfilling the requirements that for each $p \in P$:

\begin{enumerate}
\item $\sigma_p$ is linear
\item $\sigma_p$ depends smoothly on $p$
\item $\pi' \sigma_p$ is the identity map on $T_{\pi(p)} B$ (normalisation)
\item $R_g' \sigma_p = \sigma_{R_g p}$
\end{enumerate}

The elements of the vector subspace $H_p = \sigma_p(T_{\pi(p)}(B)) \subset T_p(P)$ are called horizontal.
$H_p$ is complementary to the vertical subspace $V_p := \text{ker} \pi'(p)$ of $T_p(P) = H_P \oplus V_p$.
A vector $v \in TP_p$ is uniquely split into a horizontal part, $v_h = \sigma_p(\pi' v)$, and a vertical part, $v_v = v - \sigma_p(\pi' v)$.
Similarly the vector fields (differential operators) on $P$ obtained by “lifting” vector fields on $B$ may be termed horizontal.
(Due to the demands on $\sigma_p$ horizontal fields are uniquely determined $\forall p \in \pi^{-1} \text{supp}(v)$ where $v \in TB$.)
In a chart these horizontal fields give rise to “covariant derivatives”.
Note that the horizontal fields do not form an algebra.
The choice of an $H$ may also be specified by the smooth non-vanishing Lie algebra valued \emph{connection one-form} on $P$ annihilating $H_p$ for all $p \in P$.\\

The \emph{connection one-form} or a \emph{differential-geometric connection} $\omega$ on the principal bundle $\mathcal{P}$ has the following properties:

\begin{enumerate}
\item The restriction of $\omega$ to a fibre, $G$, yields the \emph{Maurer-Cartan one-form} or \emph{fundamental form} $\omega_{MC}$ on the fibre.
\item Under the natural left action of $G$ on $P$ the connection one-form satisfies $L_g^* \omega = \text{Ad}(g) \omega$.
\end{enumerate}

At a point $p \in P$, $\omega_p$ acts on a vector field $v$, with $v(p) \in TP_p$, by annihilating the horizontal part $v_h(p)$ of $v(p), \, \omega_p(\sigma_p(\pi' v(p))) = \omega_p(v_h(p)) = 0$.
Acting on the vertical part $v_v(p) = v(p) - v_h(p) = \chi_p(X(v_v(p)))$ of $v(p)$, the connection one-form produces the Lie algebra element $X_v := \omega_p(v_v(p)) = \chi_p^{-1}(v_v(p))$, where we identify the Lie Algebra with $TG_e$.
Note that $R_g^* \omega_p(v_v(p + v_h(p)) = R_g^* \omega_p(v_v(p)) = \omega_p(R_g'(\chi_p(X))) = \text{Ad}_{g^{-1}} X$.\\
We may regard $\omega \in \Lambda^1(P, \mathbf{g})$, a Lie algebra valued $1$-form on $P$ acting on $TP$, as an extension of $\chi^{-1} \in \Lambda^1(F \cong G)$ acting on the vertical fields of $TP$ only. As with $\sigma_p$, $\omega_p$ determines the same $H_p$ (and thereby $V_p$) as does $f(\pi p) \sigma(p)$ with $f \in C_d^\infty(B)$.
The normalisation of $\omega_p$ is the condition that $\omega_p(v_v(p)) = \chi_p^{-1}(v_v(p))$.\\
Connection one-forms exist on any principal bundle with paracompact base.
The space of connections on a given principal bundle is an affine subspace of the smooth, right invariant one-forms on $\mathcal{P}$.
When $\omega^1$ and $\omega^2$ are connection forms on $\mathcal{P}$, so is $\lambda \omega^1 + (1-\lambda) \omega^2$, where $\lambda \in C_d^\infty (\pi P)$.\\
In a local trivialisation $\phi_k : \pi^{-1} U_k \rightarrow U_k \times G$ the restriction of $\omega$ to $\pi^{-1} U_k$ takes the form $\omega_k(b, k) = \omega_{MC}(b) + \text{Ad}(g) A(b)$, where $A$ is a Lie algebra valued one-form on $U_k \subset B$.\\
The exterior covariant derivative $\text{D} \alpha$ of a Lie algebra valued $r$-form $\alpha$ on $\mathcal{P}$ is $\text{D} \alpha(v_1, ..., v_{r+1}) = \text{d} \alpha( \sigma \pi' v_1, ..., \sigma \pi' v_{r+1} )$.
Let $\omega$ be a connection one-form on $\mathcal{P}$, then $\Omega(u, v) := \text{D} \omega(u, v) = \text{d} (u, v) + [\omega(u), \omega(v)]$ and $\text{D} \Omega = 0$.\\

Since $B$ is paracompact it may be given a proper Riemannian structure. If $B$ is also non-compact it admits a hyperbolic structure.
If $G$ is semi-simple its Killing form is negative definite, thus under such circumstances a connection may obviously be used to extend (pseudo)-Riemannian structures to $P$.
More precisely let $\gamma$ be a (pseudo)-Riemannian metric on $B$, $K : TG \times TG \rightarrow R, \, K(X,Y) = \text{Tr} (\text{ad}(X) \cdot \text{ad}(Y))$ the Killing form on $G$ and $\omega$ a connection one form on $P$, then a symmetric bilinear mapping $\Gamma : TP \times TP \rightarrow R$ given by

\begin{equation*}
\Gamma(X, Y) = \Gamma(X_h + X_v, Y_h + Y_v) = \gamma(X_h, Y_h) \pm K(X_v, Y_v)
\end{equation*}

is a non-degenerate (pseudo)-Riemannian metric on $P$ when the structure group is semi-simple.
Moreover, horizontal and vertical vectors on $P$ are orthogonal under $\Gamma$.
If the minus sign is chosen in the definition of $\Gamma$ it defines a hyperbolic structure on $P$ whenever $B$ has one.
Note also that the form $\overline{\omega_0} = \Gamma(X_v, \cdot)$ is a scalar valued one-form annihilating the horizontal subspace for any $X_v$.
In particular it is also a basis for the vertical vectors, i.e. the invariant vector fields defining the Lie algebra of $G$.\\

The sections discussed above are assumed to be $C^\infty$, but we have not yet given a topology to $C^\infty$ sections, which is necessary for functional analysis.
On a general fibre bundle we may give compactly supported sections a topology induced by a family of semi-norms:\\
Let $\pi : P \rightarrow M$ be a smooth $N$-dimensional vector bundle of rank $n$, $U \subset P$ open and contractible, $\phi : U \rightarrow \mathbb{R}^N$ an admissible chart and $K \subset U$ compact.
Furthermore let $\{e_i\}$, $i = 1, ..., n$, be smooth local sections of $U$ such that $\{e_i(p)\}$ span the fibre at any $p \in U$.
Then for any section $s$ of the vector bundle and $R \in \mathbb{N}$:

\begin{equation*}
s_U = s^i e_i \quad \text{and} \quad \{\text{sup}_{p \in K} \text{sup}_{\lvert m \rvert \le R} \text{sup}_{i \le n} \left| \frac{\partial^{\lvert m \rvert} s^i}{\partial x^m} (p) \right| \},
\end{equation*}

where $x = \pi p$, is a semi-norm dependent on the compact set, the integer $R$, the chart and the choice of base sections.
This family of semi-norms defines a Frechét topology on the smooth sections of $P$ ($\Gamma^R(P)$ or when $R$ is unrestricted $\Gamma^\infty(P)$).
This $C^\infty$ topology is independent of the choices of charts and base sections and compactum.\\
Furthermore, compactly supported sections, $\Gamma_0^\infty (P)$, are dense in $\Gamma^\infty (P)$ in this topology.

\end{appendices}

\bibliography{paper} 
\bibliographystyle{ieeetr}

\end{document}